\documentclass[10pt,conference,letterpaper]{IEEEtran}

\usepackage{cite}
\usepackage{latexsym}
\usepackage{amsmath}
\usepackage{amsthm}
\usepackage{amssymb}
\usepackage{mathrsfs}
\usepackage[dvips,dvipdf]{graphicx}
\usepackage{graphics}
\usepackage{algorithm}
\usepackage{algorithmic}
\usepackage{multirow}
\usepackage{bbm}
\usepackage{arydshln}
\usepackage{url}
\usepackage{pict2e,picture} 
\usepackage{tikz}   
\usepackage{pifont} 
\usepackage{enumitem}

\usepackage[skip=0pt]{caption}
\usepackage[skip=0pt]{subcaption}


\bibliographystyle{IEEEtran}

\newtheorem{thm}{Theorem}

\newtheorem{lem}{Lemma}

\newtheorem{assum}{Assumption}

\IEEEoverridecommandlockouts
\begin{document}
	
\title{Toward Low-Cost and Stable Blockchain Networks}

\author{Minghong Fang \mbox{\hspace{0.4cm}} Jia Liu\\ 
	Department of Computer Science\\ 
	Iowa State University
	\thanks{This work has been supported in part by NSF grants ECCS-1818791, CCF-1758736, CNS-1758757; ONR grant N00014-17-1-2417, and AFRL grant FA8750-18-1-0107.}
}

\maketitle

\begin{abstract}
Envisioned to be the future of secured distributed systems, blockchain networks have received increasing attention from both the industry and academia in recent years. 
However, blockchain mining processes demand high hardware costs and consume a vast amount of energy (studies have shown that the amount of energy consumed in Bitcoin mining is almost the same as the electricity used in Ireland). 
To address the high mining cost problem of blockchain networks, in this paper, we propose a blockchain mining resources allocation algorithm to reduce the mining cost in PoW-based (proof-of-work-based) blockchain networks. 
We first propose an analytical queueing model for general blockchain networks. 
In our queueing model, transactions arrive randomly to the queue and are served in a batch manner with unknown service rate probability distribution and agnostic to any priority mechanism. 
Then, we leverage the Lyapunov optimization techniques to propose a dynamic mining resources allocation algorithm (DMRA), which is parameterized by a tuning parameter $K>0$. 
We show that our algorithm achieves an $\left[ {O({1 \mathord{\left/ {\vphantom {1 K}} \right. \kern-\nulldelimiterspace} K}),O(K)} \right]$ cost-optimality-gap-vs-delay tradeoff. 
Our simulation results also demonstrate the effectiveness of DMRA in reducing mining costs.
\end{abstract}

\section{Introduction} \label{sec:intro}

In recent years, blockchain technologies have gained significant attention from both the industry and academia.
In stark contrast to traditional Peer-to-Peer (P2P) networks, blockchain technologies provide a truly decentralized and reliable mechanism, which allows users to communicate securely with each other without the need to trust any third parties. 
As a result, blockchains have been used as a platform for large-scale distributed systems to enable a wide range of decentralized applications (dApps), such as digital currencies \cite{nakamoto2008bitcoin}, food traceability \cite{kamath2018food}, just to name a few.

However, as blockchain applications become more prevalent and the daily transactions in blockchains continue to rise, several fundamental performance issues of blockchain systems also emerge. 
One pressing performance issue is the high cost resulting from skyrocketing mining energy expenditure in operating blockchain systems.
For example, in PoW-based (proof-of-work-based) blockchains such as Bitcoin, a miner has to invest significant computational power to find a solution and show PoW faster than other competing miners.
Due to this competitive nature and the fact that the probability of mining the next block is positively correlated to the computational resources used for solving the puzzle, miners tend to consume vast amounts of electricity and buy various types of hardware (e.g., CPU, GPU, FPGA, ASIC, etc.) to accelerate their mining processes. 
Indeed, studies have shown that the amount of global energy consumed in Bitcoin mining is almost the same as the electricity used in Ireland\cite{o2014bitcoin}.

To achieve a more sustainable growth of blockchain systems, several new mechanisms have been proposed to increase the mining resource efficiency, including, but not limited to, proof-of-stake (PoS)\cite{bentov2016snow}, Trusted Execution Environments (TEEs)\cite{zhang2017rem}, etc.
However, recent studies also found that these new mechanisms either are vulnerable to adversarial attacks or only have marginal energy savings\cite{ahmed2018don}.
Due to the limitations of these new blockchain designs, in this paper, we dedicate ourselves to improving the mining resource efficiency of the traditional and dominant PoW-based blockchains\footnote{PoW-based blockchain protocols account for more than 90\% of existing digital currencies market \cite{gervais2016security}.
}.
Specifically, we focus on optimizing the operations of PoW-based blockchains from a theoretical perspective by taking a {\em queueing theoretic approach} and investigate how to dynamically allocate mining resources based on the number of transactions in the Mempool to minimize the time-average mining cost (mining energy cost offset by reward), while keeping the blockchain networks {\em stable} at the same time.
The main technical results are summarized as follows:

\begin{list}{\labelitemi}{\leftmargin=1em \itemindent=-0.5em \itemsep=.2em}
\item We develop a new logical queueing-based analytical model that allows us to formulate the blockchain mining cost minimization problem as a stochastic optimization problem.
We note that our proposed queueing-based analytical model could be of independent interest for other research problems in blockchain systems.

\item Based on the above analytical model, we propose a queue-length-based online distributed dynamic mining resources allocation algorithm (DMRA) to identify near-optimal solutions.
Our proposed algorithm does not require any prior statistical information of the arrival and service processes. 
Through theoretical analysis, we show that our algorithm comes within an $O(1/K)$ distance to the optimal time-average mining cost, where $K>0$ is a system parameter in our proposed algorithm.
Meanwhile, the proposed algorithm achieves an $O(K)$ bounded time-average queue-length.

\item Besides theoretical analysis, our simulation results also show that DMRA significantly reduces the mining  cost while keeping the blockchain queue stable at the same time.
\end{list}

\section{Related Work} \label{sec:related_work}

To our knowledge, in the blockchain literature, queueing related works include\cite{kasahara2016effect,ricci2019learning,koops2018predicting}.
In \cite{kasahara2016effect}, the authors studied the problem of transaction confirmation time, where the transaction processes are analytically modeled as an $M/{G^B}/1$ queue. 
In this queueing model, transactions arrive at a single server queue according to a Poisson process and are served in batch. 
In their model, miners are profit-driven and select transactions from Mempools according to the transaction fees. 
It was shown that the confirmation time of transactions with a larger transaction fee are shorter than those with a smaller fee. 
Ricci et al. \cite{ricci2019learning} adopted an $M/G/1$ queueing model to analyze transaction delays in blockchain networks. 
They showed that transaction fees and transaction values are two important factors that influence the transaction delays. 
Koops \cite{koops2018predicting} modeled the Bitcoin confirmation time as a Cramer-Lundberg process and provided a dynamic approach to uncover the probabilistic distribution of the confirmation time of Bitcoin transactions.  
Our work differs from \cite{kasahara2016effect,ricci2019learning,koops2018predicting} in the following key aspects:
First, all the aforementioned existing work studied transactions confirmation time, while the focus of this paper is on mining cost minimization subject to the queueing stability of the blockchain system.
Second, unlike \cite{kasahara2016effect,ricci2019learning,koops2018predicting}, 
our proposed algorithm does not rely on any statistical assumptions on the arrivals and services.
This is desirable because, in practice, transactions arrival and service processes are often unknown and can hardly be assumed as a particular queueing model. 

Recently, there is also a line of work focusing on resource allocation problems in blockchain networks. 
For example, Jiao et al. \cite{jiao2018social} considered a mobile blockchain network, where mobile users buy resources from the edge computing service provider (ESP) and the search for PoW solution is offloaded to the service provider. 
They provided an auction-based scheme to maximize the social welfare of mobile blockchain network. 
Similarly, Xiong et al. \cite{xiong2018optimal} also adopted the computational offloading mechanism to manage the computing resource in mobile blockchain. 
They formulated a two-stage Stackelberg game between the service provider and mobile miners to maximize the profit of miners and edge service provider.
Li et. al \cite{li2019decentralized} also modeled the energy allocation as a Stackelberg game for mobile blockchain in Internet-of-Things (IoT) devices, and they applied backward induction to guarantee the benefit of microgrids and miners.
We note that all of these existing works\cite{jiao2018social,xiong2018optimal,li2019decentralized} aimed to study the resources allocation problem of mobile blockchain network in IoT devices, where the mining process is offloaded to the service provider. 
In contrast, we propose a {\em logical} queueing-based analytical model to investigate how to directly allocate mining resources among miners to reduce the their costs in a general blockchain network.

\section{Analytical Model and Problem Formulation} \label{sec:method}

\subsection{Blockchain Network Information Propagation: A Primer} \label{subsec:propagation}

\begin{figure}[t!]
	\centering
	\begin{subfigure}[b]{0.22\textwidth}
		\includegraphics[width=\textwidth]{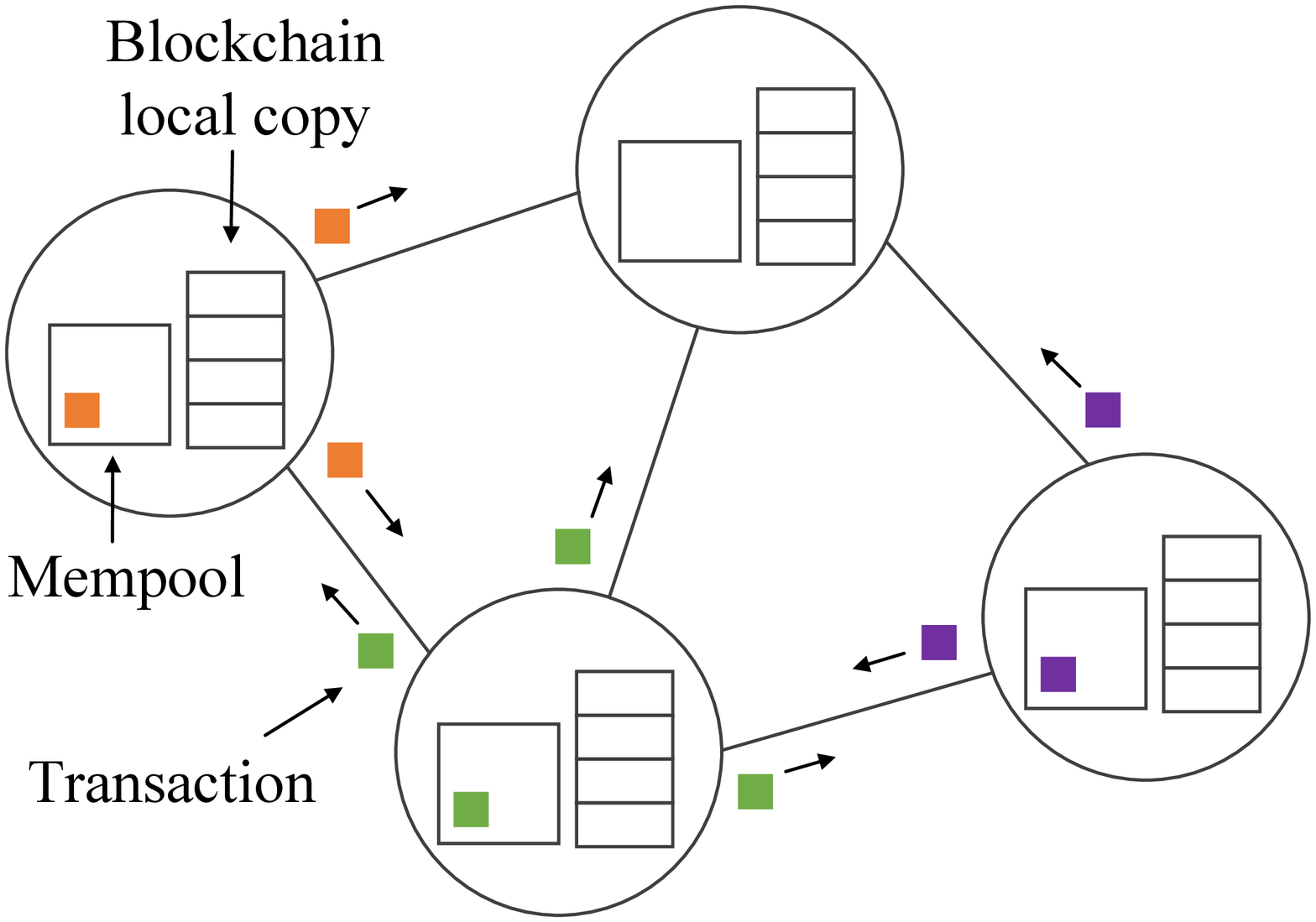}
		\caption{}
	\end{subfigure}
	\begin{subfigure}[b]{0.22\textwidth}
		\includegraphics[width=\textwidth]{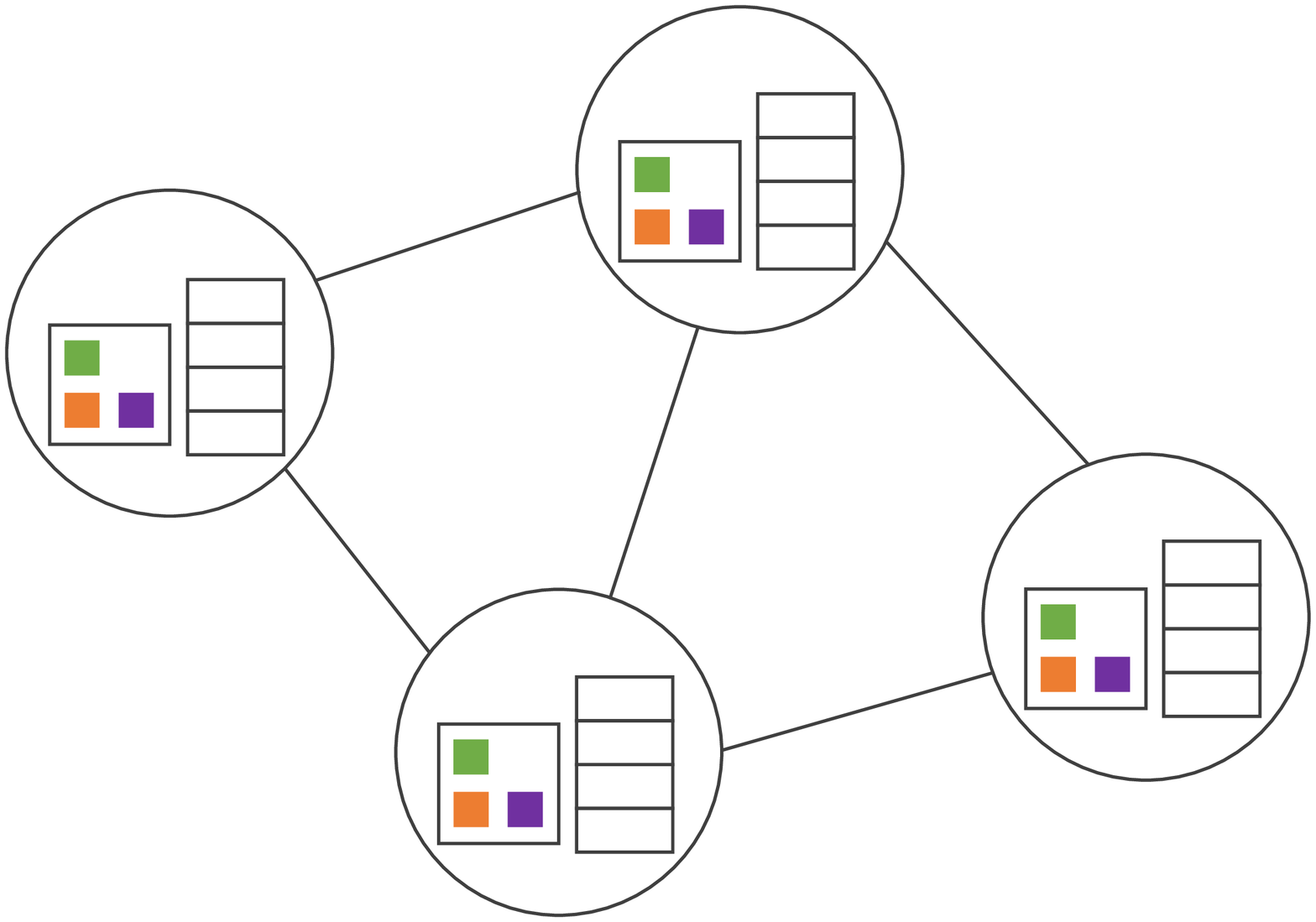}
		\caption{}
	\end{subfigure}
	\begin{subfigure}[b]{0.22\textwidth}
		\includegraphics[width=\textwidth]{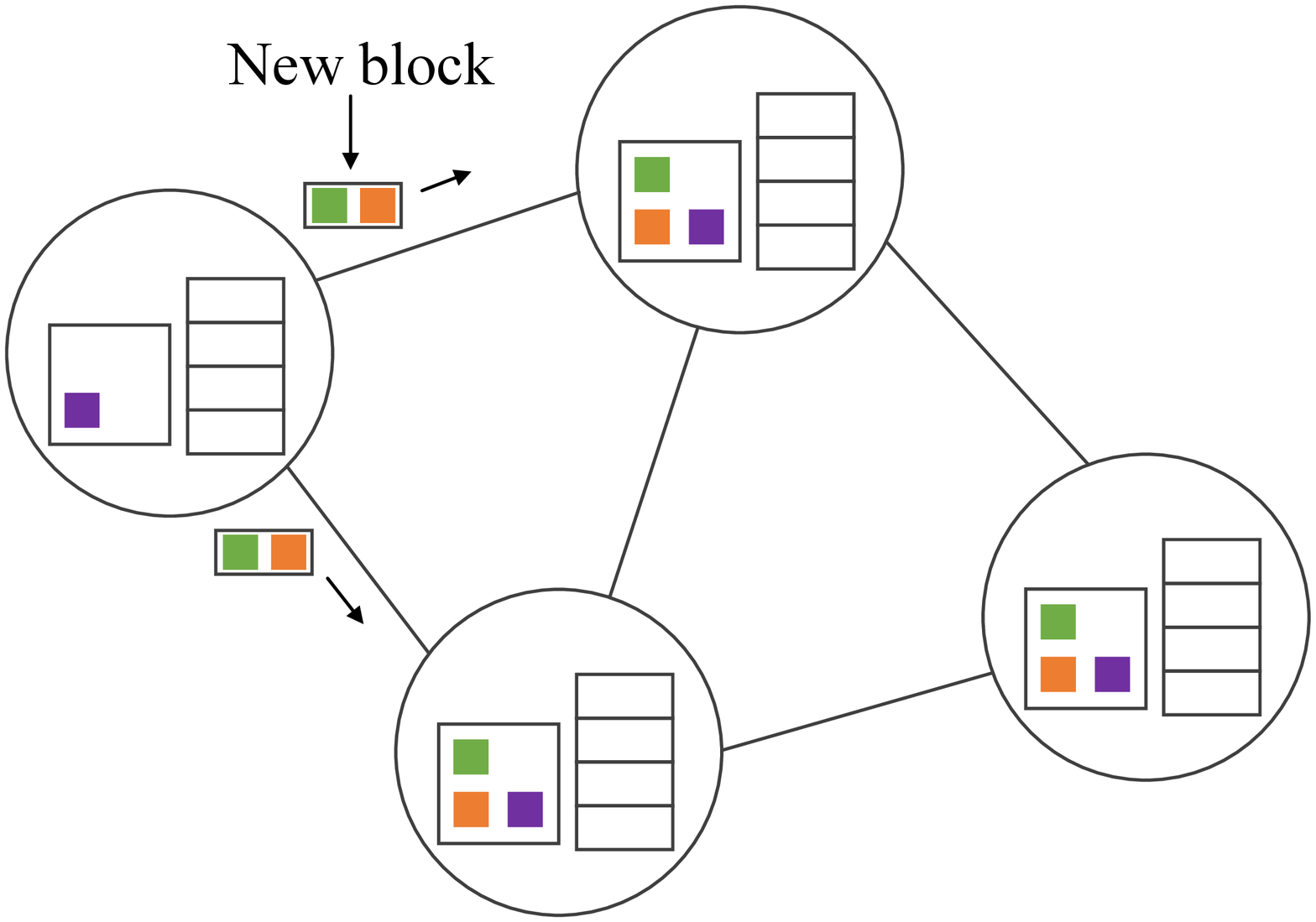}
		\caption{}
	\end{subfigure}
	\begin{subfigure}[b]{0.22\textwidth}
		\includegraphics[width=\textwidth]{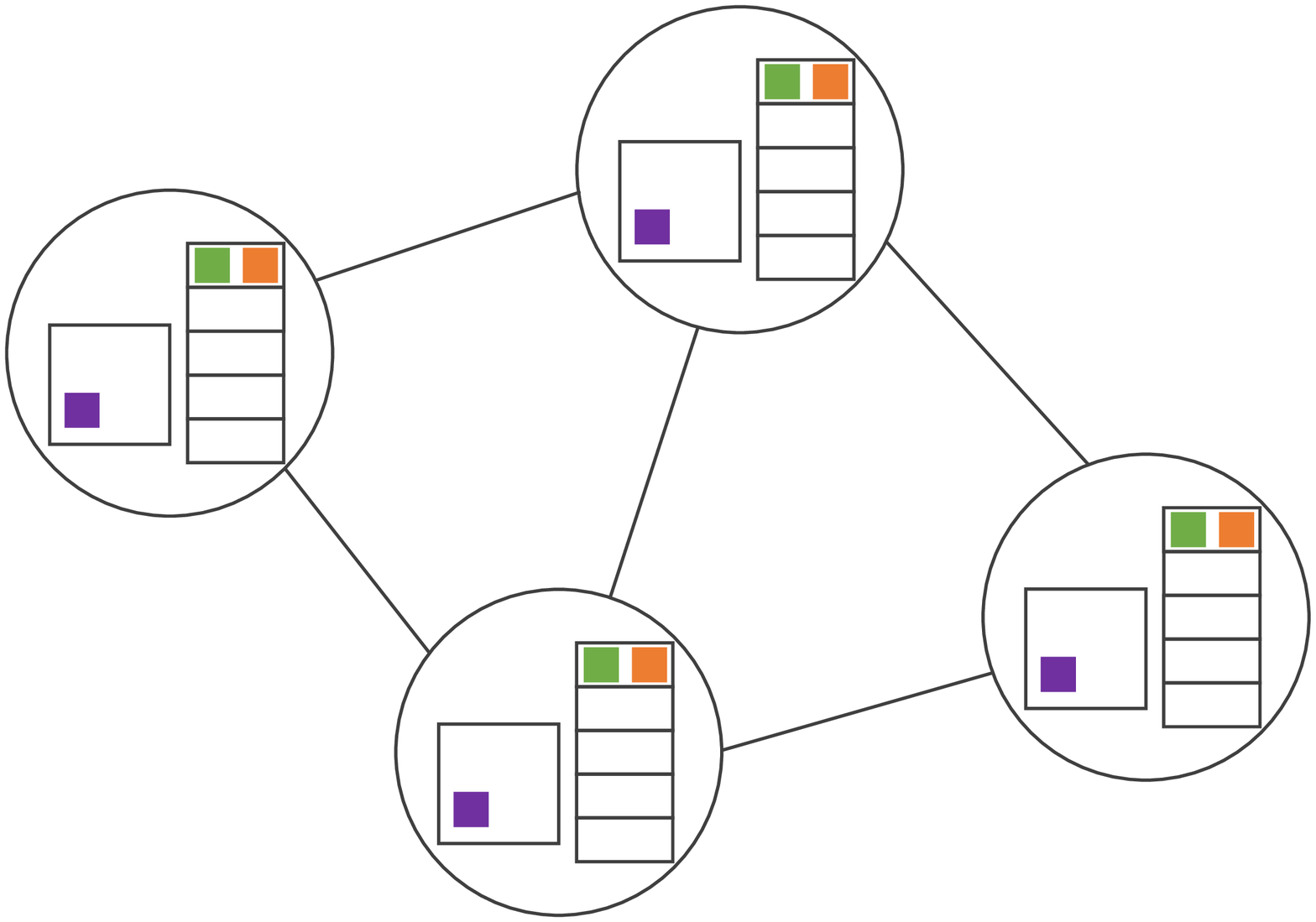}
		\caption{}
	\end{subfigure}
	\caption{An illustrative example of information propagation in a blockchain network.} \label{info_prop}
\vspace{-.2in}
\end{figure}

In blockchain networks, there are four steps for transactions to be delivered to the network and finally included in the global blockchain, as shown in Fig.~\ref{info_prop}. 
Each miner in the network (also referred to as a ``node'' in the rest of the paper) contains an exact local copy of the blockchain, and each miner holds a memory, called Mempool, where transactions are stored and waiting to be processed.
When a new transaction is broadcast to the network, it will first be validated by the miner at which it arrives. 
If valid, the transaction will be stored in the Mempool and propagated to the other miners, see Fig.~\ref{info_prop}(a); otherwise, the entry miner will reject the transaction. 
Next, this transaction is broadcasted to the whole blockchain networks, see Fig.~\ref{info_prop}(b). 
After selecting a set of transactions from their own Mempools and forming them into a block, miners compete to find a solution to PoW. 
When a miner successfully obtains a solution, it immediately broadcasts the newly mined block to its peers, see Fig.~\ref{info_prop}(c). 
Other miners then check whether the solution to PoW is correct. 
If the majority of miners reach a consensus that the solution is correct, the new block will be added to the global blockchain and transactions contained in this verified block will be removed from the Mempools of all miners, see Fig.~\ref{info_prop}(d).

\subsection{A Virtual Blockchain Queueing Model} \label{subsec:queue_model}

Note that in the information propagation process as shown in Fig.~\ref{info_prop},
the propagation time for spreading new transactions across the network in a blockchain is on a much shorter timescale (e.g., seconds) compared to the block mining time (e.g., on the order of 10 minutes). 
This observation suggests that the propagation delays in a blockchain are negligible in practice.
Hence, a new valid transaction can be viewed as arriving at the Mempools of all nodes simultaneously.
Also, since all nodes maintain the same set of transactions after each propagation process is finished, the Mempools of all nodes can be viewed as {\em a global logical queue} that stores all transactions waiting to be processed at multiple servers that represent the miners, as illustrated in Fig.~\ref{block_queue}(a).
In this multi-server queueing model, transactions arrive randomly and are removed from the queue when a new block is generated.
Again, noting that the update time upon recovering a new block across the network is negligible, the queueing model in Fig.~\ref{block_queue}(a) can be further simplified as a single-server queue as shown in Fig.~\ref{block_queue}(b), where the single virtual server models the fact that in each time-slot, only one miner will be winning in solving the PoW.

We assume that the system operates in slotted time indexed by $t \in \{ 0,1,2,...\}$.
We assume that the duration of a time-slot is longer than the average block generation time.
Let $\mathcal{N}$ be the set of miners with $|\mathcal{N}| = N$, and each miner is indexed by $i=1,\ldots,N$.
We let $D$ be the total number of resource types in the system, e.g., electricity, CPU cores, GPU units, FPGA units, etc.
Let $\theta _i^k[t]$, $k \in \{1,2,...,D\}$, denote the amount of resource $k$ allocated at miner $i$ in time-slot $t$.
For convenience, we use vector ${\pmb{\theta} _i}[t] \triangleq [\theta _i^1[t], \theta _i^2[t], ..., \theta _i^D[t]]^{\top} \!\in\! \mathbb{R}^{D}$ to compactly represent all resources that miner $i$ invests to mine a new block in time-slot $t$.
We let weight vector $\pmb{w} \triangleq [w_1,w_2,...,w_D]^{\top} \in \mathbb{R}^{D}$ represent the relative importance of each type of resource. 
Let $Z({\pmb{\theta} _i}[t])$ denote the {\em random} amount time for miner $i$ to mine a new block given resources ${\pmb{\theta} _i}[t]$.
According to \cite{dimitri2017bitcoin}, we assume that $Z({\pmb{\theta} _i}[t])$ follows an exponential distribution with rate parameter ${\lambda _i}({\pmb{\theta} _i}[t])$ dependent on resource allocation $\pmb{\theta}_i[t]$.
We assume ${\lambda _i}({\pmb{\theta} _i}[t])$ is a concave function of ${\pmb{\theta} _i}[t]$, which represents the ``diminishing return effect." 
This is because blockchain systems are fundamentally security-constrained: certain level of latency is required in blockchain networks to mitigate double-spending frauds. 
As a result, doubling the mining resources does not necessarily result in a mining speed twice as fast. 
In this paper, we propose a novel $\epsilon$-parameterized function to model ${\lambda _i}({\pmb{\theta} _i}[t])$:
\begin{align}
{\lambda _i}(\pmb{\theta}_{i}[t]) = {({\pmb{w}^\top {\pmb{\theta}_{i}[t]}})^\epsilon },
\end{align}
where $0 <\epsilon < 1$ is a system parameter.
Clearly, $\lambda_i(\cdot)$ is a concave function. 
Here, $\epsilon$ can be interpreted as a ``difficulty'' parameter of blockchain networks. 
Under security constraints, the larger the value of $\epsilon$, the easier to mine a new block.

\begin{figure}[t!]
	\centering
	\includegraphics[width=.9\linewidth]{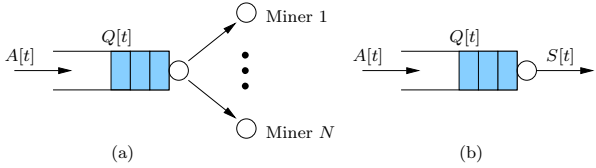}
	\caption{An illustration of blockchain virtual queueing model.}\label{block_queue}
	\vspace{-.2in}
\end{figure}

In blockchain networks, every miner selects a set of transactions from their own Mempools (or the global logical queue in our model) to generate a new block. 
Every miner will work on a different puzzle that is unique to the block they form.
Let $S[t]$ be the random number of blocks mined by the blockchain system in time-slot $t$.
Note that the average time the miners generate a block can be computed as ${\mathbb{E}\{ \mathop {\min }_{i \in \mathcal{N} } Z({\pmb{\theta} _i}[t])\} }$.
Hence, the average number of mined blocks under resources $\pmb{\theta}_{i}[t]$, $\forall i$, can be computed as:
\begin{align}
\mathbb{E}\{ S[t]\}  
& = \frac{1}{{\mathbb{E}\{ \mathop {\min }\nolimits_{i \in \mathcal{N} } Z({\pmb{\theta} _i}[t])\} }} \nonumber\\
&\stackrel{(a)}= \frac{1}{{\frac{1}{{\sum_{i = 1}^N {{\lambda _i}({\pmb{\theta} _i}[t])} }}}} = \sum_{i = 1}^N {{\lambda _i}({\pmb{\theta} _i}[t])},
\label{service_rate}
\end{align}
where $(a)$ follows from the fact that if ${X_1},{X_2},...,{X_N}$ are mutually
independent random variables and ${X_i} \sim \exp ({\lambda _i})$, $\forall i=1,\ldots,N$, then $\min \{ {X_1},{X_2},...,{X_N}\}  \sim \exp (\sum_{i = 1}^N {{\lambda _i}} )$.
Suppose that each mined block contains a fixed number of transactions (denoted by $V$). 
Then, the system can serve $S[t]V$ transactions in time-slot $t$. 
Meanwhile, we assume that, in each time-slot, new transactions arrive randomly at the blockchain system according to a process $\{A[t]\}_{t=0}^{\infty}$, where $A[t]$ denotes the number of transactions arriving in time-slot $t$. 
In this paper, we assume that $A[t] \leq A_{\max}$ and $S[t] \leq S_{\max}$, $\forall t$, for some constants $A_{\max}, S_{\max} \geq 0$.
Let $Q[t]$ represent queue-length (i.e., the number of transactions) in the queue in time-slot $t$. 
Then, the queue-length evolution can be written as:
\begin{align}
Q[t{\rm{ + }}1] = \max \left\{ {Q[t] - S[t] V + A[t],0} \right\}.
\label{actual_queue}
\end{align}

\subsection{Problem Formulation} \label{subsec:formulation}

In PoW-based blockchains, miners compete to solve the puzzle to mine a new block, and the successful miner will win a fix reward $R$ and flexible transaction fees $M$. 
Since there are $S[t]$ number of blocks mined by the blockchain system in time-slot $t$, the total reward provided by the system is $S[t](R+M)$.
We let $c(\pmb{\theta}_{i}[t])$ denote the cost of miner $i$ in time-slot $t$ given resource allocation $\pmb{\theta}_{i}[t]$, where the cost function $c(\cdot)$ is assumed to be increasing, convex, and differentiable. 
We note that here we assume that the miner will receive the reward once he mines a new block.
In practice, the miner may not receive the reward if the mined block is not finally added to the global blockchain due to branching\footnote{
By design, a blockchain could branch if different blocks are generated simultaneously.
Eventually, blocks not on the main chain may be discarded. 
}.
Hence, our model can be viewed as a lower bound for the mining cost in actual blockchain networks. 
The problem of handling cost in branching will be left for our future studies.

From a social welfare perspective, the blockchain system operator wants the miners to invest fewer total resources in mining new blocks.
At the same time, each miner tries to maximize his/her own rewards.
These two conflicting goals can be represented by the total net expected mining cost in time-slot $t$ (resource costs subtracted by the total received rewards for the miners $R+M$), which can be computed as:
\begin{align}
\mathbb{E}\{p[t]\} & = \mathbb{E} \bigg\{ \sum\nolimits_{i = 1}^N c(\pmb{\theta}_{i}[t]) - S[t](R+M) \bigg\}. 
\notag \\
& = \sum\nolimits_{i = 1}^N c(\pmb{\theta}_{i}[t]) - (R+M) \sum\nolimits_{i = 1}^N {{\lambda _i}({\pmb{\theta} _i}[t])}. 
\label{total_cost}
\end{align}
Define $\overline{p}$ as the time-average mining cost for all miners, i.e., $\overline{p} = \limsup_{t \to \infty } \frac{1}{t} \sum_{\tau = 0}^{t - 1} \mathbb{E}\left\{ p[\tau] \right\}$.

Apart from minimizing the time-average mining cost, the blockchain system operator also wants to guarantee that the allocated mining resources at all miners are sufficient such that transactions can be processed in a timely manner and the  blockchain queue is stable. 
In this paper, we adopt the following notion of queueing stability\cite{neely2010stochastic}: We say that the blockchain queue is strongly stable if it satisfies:
\begin{align} 
\mathop {\limsup }\limits_{t \to \infty } \frac{1}{t}\sum\limits_{\tau  = 0}^{t - 1} {\mathbb{E}\left\{ Q[\tau ]\right\} }  < \infty, 
\label{queue_stable}
\end{align}
that is, the stability of blockchain queue implies that the transactions in queue cannot be accumulated to infinity. 
Putting the above objective function and constraints together, the mining resources allocation optimization problem (MRA) can be formulated as:
\begin{equation*}
\begin{array}{lll}
\textbf{MRA:} &\text{Minimize } & \!\!\!\! \overline{p} \\
& \text{subject to } & \!\!\!\! 0 \le {\pmb{\theta} _i}[t] \le {\pmb{\theta} _{\max}}, \forall i, t, \\
& & \!\!\!\! \text{Queueing-stability constraint in } (\ref{queue_stable}),
\end{array} \!\!\!\!
\end{equation*}
where vector ${\pmb{\theta} _{\max}}$ contains the upper bounds for all mining resources and the inequalities between vectors in the constraints are component-wise.
In what follows, we will develop a dynamic resource allocation algorithm that provably achieves the optimal solution to Problem MRA.

\section{Dynamic Mining Resources Allocation Algorithm} \label{sec:alg}

In this section, we consider the problem of optimally allocating mining resources to minimize the mining cost while maintaining the stability of the blockchain system.
To this end, we propose a dynamic mining resources allocation algorithm in Section~\ref{subsec:dmra} and present theoretical analysis of the proposed algorithm in Section~\ref{subsec:analysis}.

\subsection{Dynamic Mining Resources Allocation Algorithm} \label{subsec:dmra}

We note that Problem MRA is a challenging Markov decision process problem under a dynamic system with a large solution space.
To solve Problem MRA, we first present a useful lemma that shows that under mild assumptions, it suffices to only consider the class of stationary randomized policies, which significantly reduces the size of solution space to be considered. 

\begin{lem}[Solution Space] \label{lem_solnspace}
	Let $\pmb{\theta}^{*} = [\pmb{\theta}_{1}^{*},\ldots,\pmb{\theta}_{N}^{*}]^{\top}$ be an optimal mining resources allocation vector.
	If Problem MRA is feasible, and the blockchain system satisfies the boundedness assumptions and the law of large number assumption,  
	then, for any $\delta \ge 0$, there exists a stationary randomized policy that makes mining resources allocation decisions depending only on the system state while at the same time satisfies: $\bar{p} \leq p^* + \delta$ and $\mathbb{E}\{ A[t]\}  \le \mathbb{E}\{ S[t]V\}$,
where ${p^*} \triangleq \sum_{i = 1}^N {\{ c(\pmb{\theta}_{i}^* ) - (R+M){\lambda _i}(\pmb{\theta}_{i}^*)\}}$.
\end{lem}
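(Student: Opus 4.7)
The plan is to adapt the standard Lyapunov-optimization machinery of \cite{neely2010stochastic}: for stochastic network problems whose exogenous random events (here the arrivals $A[t]$ and the per-slot mining randomness) are i.i.d.\ across slots, the infimum of the time-average cost over all causal policies can be matched, up to an arbitrary $\delta$, by a stationary randomized policy whose decisions depend only on the current exogenous state.

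First I would characterize the one-slot achievable set
\[
\Gamma \triangleq \Bigl\{ (p,s) : p = \textstyle\sum_{i=1}^{N} c(\pmb{\theta}_i) - (R+M)\sum_{i=1}^{N} \lambda_i(\pmb{\theta}_i),\ s = V\sum_{i=1}^{N} \lambda_i(\pmb{\theta}_i),\ 0 \le \pmb{\theta}_i \le \pmb{\theta}_{\max} \Bigr\},
\]
and pass to its closed convex hull $\overline{\mathrm{co}}(\Gamma) \subset \mathbb{R}^2$, which is exactly the set of expected per-slot (cost, service) pairs realizable by mixtures of pure actions, i.e., by stationary randomized rules. Caratheodory's theorem in $\mathbb{R}^2$ guarantees that every such mixture can be taken to have support of size at most three, so the class of policies in question is rich enough to reach any point of $\overline{\mathrm{co}}(\Gamma)$ and simple enough to describe explicitly.

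Next I would argue, via the boundedness and LLN assumptions, that the long-run averages produced by any admissible policy lie in $\overline{\mathrm{co}}(\Gamma)$ and satisfy $\bar s \ge \mathbb{E}\{A[t]\}$: boundedness of $c$ and the $\lambda_i$'s on the compact set $[0,\pmb{\theta}_{\max}]^N$ gives uniform integrability, while the LLN applied to an optimal sample path converts empirical action frequencies into a genuine distribution on actions. Specializing to the optimum of MRA, the allocation $\pmb{\theta}^*$ yields the point $\bigl(p^*,\, V\sum_i \lambda_i(\pmb{\theta}_i^*)\bigr) \in \overline{\mathrm{co}}(\Gamma)$, and by feasibility its second coordinate is at least $\mathbb{E}\{A[t]\}$. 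The claimed policy is then the (at most) three-point Caratheodory mixture realizing this pair; playing it i.i.d.\ across slots gives both $\bar p \le p^* + \delta$ and the stated service-rate inequality, with $\delta$ absorbing any slack needed when the infimum is approached rather than attained by a rational mixture of pure actions.

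The main obstacle is the converse direction implicit in the sketch above, namely that the best time-average cost achievable by \emph{any} causal (possibly history-dependent) policy is no smaller than that achievable by a stationary randomized rule. This is where the LLN really does the work, since one must upgrade pathwise convergence of time averages into convergence of the induced action distributions on the compact feasible set; the boundedness hypothesis then makes limits and expectations interchangeable. Once this equivalence of the two infima is in hand, the remainder of the lemma is a routine application of Caratheodory's theorem together with continuity of $c$ and the $\lambda_i$'s.
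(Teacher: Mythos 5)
Your proposal is correct and follows essentially the route the paper intends: the paper gives no proof of its own, deferring entirely to Theorem~4.5 of Neely's monograph \cite{neely2010stochastic}, and your sketch is a faithful reconstruction of that theorem's standard argument (one-slot achievable set, closed convex hull, Caratheodory in low dimension, LLN plus boundedness to place any causal policy's long-run averages in the hull). One small observation: since the lemma already posits a fixed optimal deterministic vector $\pmb{\theta}^{*}$, the target point $\bigl(p^{*},\, V\sum_{i}\lambda_i(\pmb{\theta}_i^{*})\bigr)$ lies in $\Gamma$ itself, so for this particular statement the stationary \emph{deterministic} policy that always plays $\pmb{\theta}^{*}$ already suffices and one may take $\delta = 0$; the full convex-hull/Caratheodory machinery is only genuinely needed when the optimum over all causal policies is not attained by a single action.
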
 

The proof of Lemma~\ref{lem_solnspace} follows from \cite[Theorem 4.5]{neely2010stochastic} and we omit the details here due to space limitation.
Based on this insight, in what follows, we propose a dynamic mining resources allocation algorithm based on Lyapunov optimization technique to stabilize the blockchain queue and solve Problem MRA.
To this end, we first define the following Lyapunov function $L[t] \triangleq \frac{{Q{{[t]}^2}}}{2}$.
Hence, the Lyapunov drift can be computed as $\Delta L[t] = \mathbb{E}\left\{ {L[t + 1] - L[t]|Q[t]} \right\}$.
It is known from the classical work by Tassiulas and Ephremides \cite{Tassiulas92:BackPressure} that greedily minimizes $\Delta L[t]$ induces queueing-stability of the system.
Further, to minimize the mining cost while maintaining queueing stability, 
we incorporate the mining cost function of Problem MRA into the Lyapunov drift $\Delta L[t]$.
That is, in every time-slot $t$, we greedily minimize the following drift-plus-penalty expression:
\begin{align}
\Delta  L[t] + K \sum\nolimits_{i = 1}^N {\left\{ {c(\pmb{\theta}_{i}[t]) - (R+M){\lambda _i}(\pmb{\theta}_{i}[t])|Q[t]} \right\}},
\label{drift_penalty}
\end{align}
where $K >0$ is a parameter that, as will be shown later, is used to tune the tradeoff between queueing delay and the achievable mining cost.
Next, we will state a basic property of the expression in (\ref{drift_penalty}) that will be useful in our later performance analysis:
\begin{lem}[Drift Bound] \label{lem_drift_bnd}
For $0 \le {\pmb{\theta} _i}[t] \le {\pmb{\theta}_{\max}}$, the drift-plus-penalty term in (\ref{drift_penalty}) satisfies:
	\begin{align}
	&\Delta  L[t] + K \sum\nolimits_{i = 1}^N {\left\{ {c(\pmb{\theta}_{i}[t]) - (R+M){\lambda _i}(\pmb{\theta}_{i}[t])|Q[t]} \right\}}   \notag \\
	& \quad\quad \leq B +  K \sum\nolimits_{i = 1}^N {\left\{ {c(\pmb{\theta}_{i}[t]) - (R+M){\lambda _i}(\pmb{\theta}_{i}[t])|Q[t]} \right\}}  \notag \\
	& \quad\quad\quad + \mathbb{E}\left\{ {Q[t](A[t] - S[t]V)|Q[t]} \right\}, 
	\label{ly_opt_min}
\end{align}
	where $B$ is defined as $B \triangleq \frac{{\max \{A_{\max }^2,S_{\max }^2 V^2} \} }{2}$.
\end{lem}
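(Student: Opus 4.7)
The plan is to carry out the standard one-step Lyapunov drift computation and then trivially introduce the penalty term on both sides. First I would start from the queue recursion \eqref{actual_queue}, $Q[t+1]=\max\{Q[t]-S[t]V+A[t],0\}$, and use the elementary inequality $(\max\{x,0\})^2\le x^2$ (valid for every real $x$) applied with $x=Q[t]-S[t]V+A[t]$ to write
\begin{align*}
Q[t+1]^2 \le Q[t]^2 + (S[t]V-A[t])^2 + 2Q[t](A[t]-S[t]V).
\end{align*}

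Next I would bound the ``noise'' term $(S[t]V-A[t])^2$ using the boundedness assumptions $0\le A[t]\le A_{\max}$ and $0\le S[t]\le S_{\max}$. Since one of $S[t]V, A[t]$ dominates the difference, we get $(S[t]V-A[t])^2 \le \max\{(S[t]V)^2, A[t]^2\}\le \max\{S_{\max}^2 V^2, A_{\max}^2\}=2B$. Substituting this, dividing by $2$, and recognizing the left-hand side as $L[t+1]-L[t]$ gives
\begin{align*}
L[t+1]-L[t] \le B + Q[t](A[t]-S[t]V).
\end{align*}
Taking the conditional expectation given $Q[t]$ yields the classical drift inequality $\Delta L[t]\le B + \mathbb{E}\{Q[t](A[t]-S[t]V)\mid Q[t]\}$.

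Finally, I would add the penalty term $K\sum_{i=1}^{N}\mathbb{E}\{c(\pmb{\theta}_i[t])-(R+M)\lambda_i(\pmb{\theta}_i[t])\mid Q[t]\}$ to both sides. Because the same quantity appears on the left and right, the inequality is preserved and reproduces \eqref{ly_opt_min} verbatim.

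The main obstacle is not conceptual but bookkeeping: one has to verify that the constant coming out of $(S[t]V-A[t])^2/2$ matches the stated $B=\max\{A_{\max}^2,S_{\max}^2V^2\}/2$ exactly. The bound $(a-b)^2\le\max\{a^2,b^2\}$ for $a,b\ge 0$ (argued by assuming without loss of generality $a\ge b$, so $|a-b|\le a$) is what delivers the tight constant; a naive expansion $(S[t]V)^2+A[t]^2$ would give the looser $B'=(A_{\max}^2+S_{\max}^2V^2)/2$, so this step must be handled carefully. Beyond that, the argument is entirely textbook Lyapunov optimization and involves no further subtleties.
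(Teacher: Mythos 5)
Your proposal is correct and follows essentially the same route as the paper: expand $Q[t+1]^2$ via the queue recursion and $(\max\{x,0\})^2\le x^2$, bound the cross-term constant by $(a-b)^2\le\max\{a^2,b^2\}$ for $a,b\ge 0$ to get exactly $B$, take conditional expectation, and add the penalty term to both sides. Your careful justification of the $\max$ bound is in fact cleaner than the paper's, which states that inequality with a typo ($(x-y)$ in place of $(x-y)^2$).
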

\begin{proof}
	Using the queue-length evolution in (\ref{actual_queue}) and the definition of the conditional Lyapunov drift given $Q[t]$, we have:
	\begin{align}
	\Delta L[t] 	
	&\!= \!\mathbb{E} \! \left\{ {L[t + 1] \!-\! L[t]} | Q[t] \right\} 
	\!=\! \frac{1}{2}\mathbb{E}\left\{ {Q{{[t + 1]}^2} \!-\! Q{{[t]}^2}} | Q[t] \right\} \notag \\
	& \!\!\!\!\!\!\!\!\!\!\!\!\!\!\!\!\!\! \le \!\frac{1}{2}\mathbb{E}\!\left\{ {{{\!(Q[t] \!-\! S[t]V \!+\! A[t])}^2} \!\!\!-\! Q{{[t]}^2}} | Q[t] \right\} 
	\!\!=\! \frac{1}{2}\mathbb{E}\!\left\{ {{{\!(A[t] \!\!-\!\! S[t]V)}^2}\!} \right\} \notag \\
	& \!\!\!\!\!\!\!\!\!\!\!\!\!\!\!\!\!+ \!
	 \mathbb{E}\!\left\{ {Q[t](A[t] \!-\! S[t]V)} | Q[t] \right\} 
	 \!\stackrel{(a)}{\le} B \!+\! Q[t] \mathbb{E}\!\left\{ {A[t] \!-\! S[t]V} |Q[t] \right\}\!,  \nonumber
	\end{align}	
	where $(a)$ follows from $(x - y) \le \max [ x^2, y^2 ]$ for $x \ge 0, y \ge 0$. 
	Adding $ K \sum_{i = 1}^N {\{ {c(\pmb{\theta}_{i}[t]) - (R+M){\lambda _i}(\pmb{\theta}_{i}[t])|Q[t]} \}}$ on both sides completes the proof.
\end{proof}

Since the arrival process $A[t]$ of new transactions is independent of $Q[t]$, minimizing the right-hand-side (RHS) of (\ref{ly_opt_min}) for all $t$, i.e., the upper bound of the drift-plus-penalty expression, is equivalent to solving the following optimization problem by plugging in Eq.~(\ref{service_rate}) and rearranging terms:
\begin{equation} \label{objectiveFrame_dyna_final}
\begin{array}{ll}
\!\!\!\! \text{Minimize } & \!\! \hspace{-.1in} K \!\! \sum\limits_{i = 1}^N \! {c(\pmb{\theta}_{i}[t]) \!-\! [K(R \!+\! M) \!+\! Q[t]V]} \sum\limits_{i = 1}^N {{\lambda _i}(\pmb{\theta}_{i}[t])} \\
\!\!\!\! \text{subject to } & \!\! \hspace{-.1in} 0 \le {\pmb{\theta} _i}[t] \le {\pmb{\theta}_{\max}}. 
\end{array} \!\!\!\!\!\!
\end{equation}
This observation motivates a dynamic mining resource allocation algorithm (DMRA) stated in Algorithm~\ref{resource_alloc_alg} as follows:

\begin{algorithm}[H]
	\caption{Dynamic \!\! Mining \!\! Resource \!\! Allocation \!\! Algorithm.} \label{resource_alloc_alg}
	\begin{algorithmic}[1]
		\STATE Initialization: Let $t=0$; Choose a value for $K$.
		\FOR {each time-slot $t=0,1,\ldots$}
		\STATE Each miner $i$ observes $Q[t]$ and then allocates mining resources $\pmb{\theta}_{i}[t]$, $\forall i$ by solving Problem (\ref{objectiveFrame_dyna_final}).			
		\STATE Let the blockchain queue evolve according to (\ref{actual_queue}).
		\ENDFOR
	\end{algorithmic} 
\end{algorithm}

\subsection{Performance Analysis} \label{subsec:analysis}
In this section, we will analyze our proposed dynamic mining resources allocation algorithm in Algorithm~\ref{resource_alloc_alg}.  
\subsubsection{Upper Bound for Mining Cost}
We first show that our proposed DMRA algorithm achieves a mining cost that is guaranteed to be within an $O(1/K)$ distance to the optimal value of Problem~MRA, and the result is stated in Theorem~\ref{theorem_upperBound_cost}:
\begin{thm} \label{theorem_upperBound_cost} 
Let $B$ be defined as in Lemma~\ref{lem_drift_bnd}.
Suppose that Problem~MRA is feasible such that there exists an optimal solution to Problem~MRA that achieves the optimal value $p^*$.
	For any $K>0$, the time-average mining cost incurred by the DMRA algorithm is upper bounded by:
	\begin{align} 
	\limsup\limits_{t \to \infty } \frac{1}{t}\sum\limits_{\tau  = 0}^{t - 1}\sum\limits_{i = 1}^N {\{ c(\pmb{\theta}_{i}[\tau]) - (R+M){\lambda _i}(\pmb{\theta}_{i}[\tau])\}}
	 \leq p^*  + \frac{B}{{K}}. \notag
	\end{align}		
\end{thm}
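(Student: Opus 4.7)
The plan is to follow the standard Lyapunov drift-plus-penalty proof recipe, leveraging the drift bound already established in Lemma~\ref{lem_drift_bnd} together with the existence of a near-optimal stationary randomized policy provided by Lemma~\ref{lem_solnspace}. The DMRA algorithm is designed to minimize the RHS of inequality (\ref{ly_opt_min}) at every time-slot $t$ (this is the content of Problem (\ref{objectiveFrame_dyna_final})). Hence, if we denote by $\pmb{\theta}^{\mathrm{DMRA}}_i[t]$ the decisions produced by DMRA and by $\pmb{\theta}^{\mathrm{RAND}}_i[t]$ the decisions produced by any other feasible policy, the drift-plus-penalty evaluated at $\pmb{\theta}^{\mathrm{DMRA}}$ is no larger than that evaluated at $\pmb{\theta}^{\mathrm{RAND}}$.

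First, I would plug in, on the RHS of (\ref{ly_opt_min}), the stationary randomized policy from Lemma~\ref{lem_solnspace} with parameter $\delta > 0$. That policy chooses $\pmb{\theta}^{\mathrm{RAND}}_i[t]$ independent of $Q[t]$, achieves expected cost at most $p^* + \delta$, and satisfies $\mathbb{E}\{A[t]\} \le \mathbb{E}\{S[t]V\}$. Consequently, the conditional expectation $\mathbb{E}\{Q[t](A[t]-S[t]V)\mid Q[t]\}$ under this policy becomes $Q[t]\,\mathbb{E}\{A[t]-S[t]V\} \le 0$. Combining these facts, the drift-plus-penalty under DMRA admits the clean bound
\begin{align}
\Delta L[t] &+ K\sum_{i=1}^N \mathbb{E}\!\left\{c(\pmb{\theta}^{\mathrm{DMRA}}_i[t]) - (R+M)\lambda_i(\pmb{\theta}^{\mathrm{DMRA}}_i[t])\,\big|\,Q[t]\right\} \notag \\
&\le B + K(p^*+\delta). \notag
\end{align}

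Next, I would take total expectation on both sides to remove the conditioning on $Q[t]$, then sum the inequality over $\tau = 0,1,\ldots,t-1$. The drift terms telescope into $\mathbb{E}\{L[t]\} - \mathbb{E}\{L[0]\}$, which is nonnegative because $L[\cdot] \ge 0$ (so dropping it only weakens the inequality). Dividing both sides by $Kt$ and rearranging gives
\begin{align}
\frac{1}{t}\sum_{\tau=0}^{t-1}\sum_{i=1}^N \mathbb{E}\!\left\{c(\pmb{\theta}_i[\tau]) - (R+M)\lambda_i(\pmb{\theta}_i[\tau])\right\} \le p^* + \delta + \frac{B - \mathbb{E}\{L[0]\}/1}{K} \cdot \frac{1}{1}, \notag
\end{align}
where the $\mathbb{E}\{L[0]\}/t$ term vanishes as $t\to\infty$. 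Taking $\limsup_{t\to\infty}$ followed by $\delta \downarrow 0$ yields the desired $p^* + B/K$ bound.

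The main obstacle, and the step that requires the most care, is the justification that it is legitimate to ``plug in'' the $\omega$-only stationary randomized policy on the RHS of the drift-plus-penalty bound. This is where Lemma~\ref{lem_solnspace} is crucial: it guarantees the existence of such a policy whose per-slot expected cost and expected service-minus-arrival satisfy the two required inequalities, without any conditioning on the actual queue state $Q[t]$ generated under DMRA. Once this substitution is valid, the rest of the argument is telescoping plus taking limits; the convexity of $c(\cdot)$ and concavity of $\lambda_i(\cdot)$ are not needed for this particular cost-gap result (they matter only for solving (\ref{objectiveFrame_dyna_final}) efficiently in practice).
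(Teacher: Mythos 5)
Your proposal is correct and follows essentially the same drift-plus-penalty argument as the paper: both plug the stationary randomized policy of Lemma~\ref{lem_solnspace} into the right-hand side of (\ref{ly_opt_min}), telescope the drift, divide by $Kt$, and take limits (the paper sets $\delta \to 0$ up front while you carry $\delta$ and send it to zero at the end, which is immaterial). The only blemish is the garbled fraction in your final display, which should read $p^* + \delta + \frac{B}{K} + \frac{\mathbb{E}\{L[0]\}}{Kt}$; since that last term vanishes as $t \to \infty$, this does not affect the conclusion.
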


\begin{proof}
	To simplify the notation, in the rest of the paper, we use $F[t] \triangleq \sum\nolimits_{i = 1}^N {\left\{ {c(\pmb{\theta}_{i}[t]) - (R+M){\lambda _i}(\pmb{\theta}_{i}[t])} \right\}} $.
	Plugging the result of Lemma~\ref{lem_solnspace} into the RHS of (\ref{ly_opt_min}) under given queue-length $Q[t]$ and setting $\delta \to 0$, we have:
	\begin{align} \label{eqn_thm_step1}
	 \Delta  L[t] \!+\! KF[t]  
	 \leq B \!+\! Kp^* \!+\! Q[t] \mathbb{E}\left\{ A[t] \!-\! S[t]V | Q[t] \right\}.
	\end{align}
	Taking expectation with respect to $Q[t]$ and then summing over $\tau  \in \{ 0,1,...,t-1\}$, we have:
	\begin{align}
	& \mathbb{E} \{L[t]\} - \mathbb{E} \{L[0]\} +  K\sum\nolimits_{\tau  = 0}^{t - 1} F[\tau]  \notag \\
	&  \leq Bt + K{p^*}t  +  \sum\nolimits_{\tau  = 0}^{t - 1} \mathbb{E}\left\{ {Q[\tau](A[\tau]- S[\tau]V)} \right\}.
	\label{upperBound_all}
	\end{align}
	Since $Q[t] \!\ge\! 0$, $\mathbb{E}\left\{ {A[t]} \right\} \!\le\! \mathbb{E}\left\{ {S[t]V} \right\}$ and $L[t] \!\ge\! 0$, we have:
	\begin{align}
	 \frac{1}{t}\sum\nolimits_{\tau  = 0}^{t - 1} F[\tau]
	 \le p^* + {\frac{B}{{K}}} + \frac{\mathbb{E}\{L[0]\}}{{{K}t}}. 
	\label{upperBound_cost}
	\end{align}
	Finally, taking $\limsup$ as $t \to \infty$ and plugging in the definition of $F[t]$ yields the upper bound.
\end{proof}

\subsubsection{Bounded Average Queue Length}
Next, we consider the delay performance of our proposed DMRA algorithm.
To this end, we first introduce the following assumption.
\begin{assum} \label{assump_slater}
	(Slater Condition). For the expected arrival rate and service rates, there exists a constant value $\Delta > 0$ that satisfies: $\mathbb{E}\left\{ {A[t]} \right\} \le \mathbb{E}\left\{ {S[t]V} \right\}  - \Delta$.  
\end{assum}
\begin{thm}
	If Problem~MRA is feasible and Assumption~\ref{assump_slater} holds, then our proposed dynamic mining resources allocation algorithm stabilizes the blockchain queue, i.e., the queue is strongly stable, and the time-average queue-length satisfies:
	\begin{align}
	\mathop {\lim \sup }\limits_{t \to \infty } \frac{1}{t}\sum\limits_{\tau  = 0}^{t - 1} {\mathbb{E}\left\{ Q[\tau]\right\} }  \le \frac{{B + {K}({p^*} - {p^{\min }})}}{\Delta },  \notag
	\end{align}	
	where $p^{\min }$ is the minimum mining cost.
	\label{theorem_queue_length}
\end{thm}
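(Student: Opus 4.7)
The plan is to combine the drift-plus-penalty bound of Lemma~\ref{lem_drift_bnd} with the Slater slack from Assumption~\ref{assump_slater} to produce a Foster--Lyapunov-style inequality with negative drift in $Q[t]$, then telescope it to obtain the claimed time-average queue-length bound. This follows the standard Lyapunov-optimization route, but a few details require care.

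First, I would exploit that DMRA, by construction, chooses $\pmb{\theta}_i[t]$ to minimize the right-hand side of (\ref{ly_opt_min}) at every slot over all feasible allocations. Hence the inequality (\ref{ly_opt_min}) still holds after replacing DMRA's per-slot decision with that of \emph{any} other feasible policy (applied counterfactually). I would substitute the stationary randomized reference policy supplied by Lemma~\ref{lem_solnspace}, whose mining cost is at most $p^{*}+\delta$ and which is feasible; combined with Assumption~\ref{assump_slater}, this reference policy satisfies $\mathbb{E}\{A[t]-S[t]V\}\le -\Delta$. Substituting and letting $\delta\downarrow 0$ yields
\begin{align*}
\Delta L[t] + K\, F^{\mathrm{DMRA}}[t] \;\le\; B + K p^{*} - \Delta\, Q[t],
\end{align*}
where $F[t]$ is the shorthand introduced in the proof of Theorem~\ref{theorem_upperBound_cost}.

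Next, to isolate $Q[t]$ cleanly on the right-hand side, I would lower-bound $F^{\mathrm{DMRA}}[t]\ge p^{\min}$ (the minimum possible instantaneous mining cost over the feasible box), obtaining
\begin{align*}
\Delta L[t] \;\le\; B + K(p^{*} - p^{\min}) - \Delta\, Q[t].
\end{align*}
Taking expectations over $Q[t]$, summing over $\tau = 0,1,\ldots,t-1$, telescoping $\mathbb{E}\{L[\tau+1]-L[\tau]\}$, and discarding the nonnegative $\mathbb{E}\{L[t]\}$, I would arrive at
\begin{align*}
\Delta \sum_{\tau=0}^{t-1} \mathbb{E}\{Q[\tau]\} \;\le\; B t + K t\,(p^{*}-p^{\min}) + \mathbb{E}\{L[0]\}.
\end{align*}
Dividing through by $\Delta t$ and taking $\limsup_{t\to\infty}$ makes the initial-condition term vanish and yields exactly the stated bound; its finiteness then gives strong stability.

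The main obstacle I expect is the mild mismatch between Lemma~\ref{lem_solnspace}, which only guarantees a reference policy with \emph{non-strict} feasibility and cost within $\delta$ of $p^{*}$, and Assumption~\ref{assump_slater}, which presumes strict slack of size $\Delta$. The standard workaround is to note that the set of achievable pairs (mining cost, mean drift) is closed and convex by time-sharing between stationary randomized policies, so one can construct a policy whose slack is arbitrarily close to $\Delta$ \emph{and} whose cost is arbitrarily close to $p^{*}$. Any residual $\delta$-terms vanish as $\delta\downarrow 0$, leaving the final bound unchanged. Once this continuity argument is in hand, the remainder reduces to routine algebraic manipulation on the drift inequality.
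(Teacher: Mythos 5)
Your proposal is correct and follows essentially the same route as the paper: plug the stationary reference policy of Lemma~\ref{lem_solnspace} into the drift-plus-penalty bound of Lemma~\ref{lem_drift_bnd}, invoke the Slater slack $-\Delta Q[t]$, lower-bound the running cost by $p^{\min}$, telescope, and divide by $t\Delta$. The only difference is cosmetic (you apply the Slater bound per slot before summing, while the paper sums first), and your extra remark reconciling the $\delta$-suboptimal reference policy with the strict slack $\Delta$ is a care point the paper silently glosses over.
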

\begin{proof}
	According to (\ref{upperBound_all}), since $L[t]\ge 0$, we have:
	\begin{align}
	\!\!\!-\mathbb{E} \{L[0]\} \!+\!  K\!\!\sum\limits_{\tau  = 0}^{t - 1} F[\tau] 
	\!\! \le \!  \! Bt \!+\! K\!{p^*}t  \!+\! \sum\limits_{\tau  = 0}^{t - 1} \!\mathbb{E}\!\left\{ {Q[\tau](A[\tau] \!-\! S[\tau]V)} \right\} \!\!\notag.
	\end{align}	
	According to Assumption~\ref{assump_slater}, we have:
	\begin{align}
	&\frac{1}{t} \sum\limits_{\tau  = 0}^{t - 1} \mathbb{E}\left\{ {Q[\tau]} \right\} \le \frac{Bt \!+\! \mathbb{E} \{L[0]\} \!+\! {{K{p^*}t \!-\! {K}\sum\nolimits_{\tau  = 0}^{t - 1} F[\tau]}}} {t\Delta } \notag \\
	& \le \frac{{B + {K}({p^*} - {p^{\min }})}}{\Delta } +\frac{{\mathbb{E} \{L[0]\}}}{{t\Delta }} .
	\end{align}
	Taking limsup as $t \to \infty$ completes the proof.
\end{proof}

\subsubsection{Time-Varying $K$ for Asymptotic Optimality}
Rather than using a fixed $K$, we can gradually increase $K$ while maintaining the blockchain queue to be mean state stable. 
By doing so, we can eliminate the term $ \frac{B}{{K}}$ in Theorem \ref{theorem_upperBound_cost}.
\begin{thm} \label{thm_zero_gap}
	If $K[t]$ is used in the drift expression of (\ref{eqn_thm_step1}) and $K[t]$ satisfies $K[t] \triangleq {K_0}{(t + 1)}$, where $K_0 >0$ is a constant, then it holds that
	\begin{align} 
	\limsup\limits_{t \to \infty } \frac{1}{t}\sum\limits_{\tau  = 0}^{t - 1}\sum\limits_{i = 1}^N {\{ c(\pmb{\theta}_{i}[\tau]) - (R+M){\lambda _i}(\pmb{\theta}_{i}[\tau])\}}
	= p^*. \notag
	\end{align}				
	\label{variable_K}
\end{thm}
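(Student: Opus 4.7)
My plan is to extend the drift-plus-penalty bookkeeping of Theorem~\ref{theorem_upperBound_cost} but replace the constant $K$ by the time-varying coefficient $K[\tau]=K_0(\tau+1)$. The starting point is Lemma~\ref{lem_drift_bnd} applied slot by slot: since the Lyapunov drift itself is independent of $K$, the lemma holds verbatim with $K[\tau]$ in place of $K$. Invoking Lemma~\ref{lem_solnspace} with $\delta\downarrow 0$, I pick the stationary randomized policy $\pmb{\theta}^{*}$ that attains per-slot expected cost $p^{*}$ and satisfies $\mathbb{E}\{A[\tau]\}\le\mathbb{E}\{S^{*}[\tau]V\}$. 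Substituting into the right-hand side of the drift-plus-penalty bound makes the queue-weighted imbalance term non-positive, leaving
\begin{equation*}
\Delta L[\tau] + K[\tau]\,\mathbb{E}\{F[\tau]\mid Q[\tau]\} \;\le\; B + K[\tau]\, p^{*},
\end{equation*}
where $F[\tau] = \sum_{i=1}^{N}\{c(\pmb{\theta}_{i}[\tau])-(R+M)\lambda_{i}(\pmb{\theta}_{i}[\tau])\}$ as in the proof of Theorem~\ref{theorem_upperBound_cost}.

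Taking total expectations and summing $\tau=0,\ldots,t-1$, the Lyapunov term telescopes and, after dropping $\mathbb{E}\{L[t]\}\ge 0$, I obtain $\sum_{\tau=0}^{t-1}K[\tau]\,\mathbb{E}\{F[\tau]-p^{*}\}\le Bt+\mathbb{E}\{L[0]\}$. Since $\sum_{\tau=0}^{t-1}K[\tau] = K_{0}\,t(t+1)/2$, dividing both sides yields the $K$-weighted time-average bound
\begin{equation*}
\frac{\sum_{\tau=0}^{t-1}K[\tau]\,\mathbb{E}\{F[\tau]\}}{\sum_{\tau=0}^{t-1}K[\tau]} \;\le\; p^{*} + \frac{2B}{K_{0}(t+1)} + \frac{2\mathbb{E}\{L[0]\}}{K_{0}\, t(t+1)},
\end{equation*}
whose right-hand side tends to $p^{*}$ at rate $O(1/t)$. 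Because $K[\tau]$ is monotonically increasing, this weighted average places asymptotically dominant mass on the tail of $\{\mathbb{E}\{F[\tau]\}\}$.

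To bridge from the weighted to the uniform Ces\`aro average claimed in the theorem, I would pair the upper bound above with the lower bound $\liminf_{t\to\infty}\tfrac{1}{t}\sum_{\tau=0}^{t-1}\mathbb{E}\{F[\tau]\}\ge p^{*}$, which follows from the optimality of $p^{*}$ (Lemma~\ref{lem_solnspace}) over \emph{all} feasible policies, including DMRA with a time-varying $K$. Applying an Abel-summation identity that relates $\sum K[\tau]a_{\tau}$ to partial sums of $a_{\tau}=\mathbb{E}\{F[\tau]-p^{*}\}$, together with boundedness of $F[\tau]$ inherited from the compact resource constraint $0\le\pmb{\theta}_{i}[\tau]\le\pmb{\theta}_{\max}$, I expect to pinch $\limsup_{t\to\infty}\tfrac{1}{t}\sum_{\tau=0}^{t-1}\mathbb{E}\{F[\tau]\}\le p^{*}$, yielding the claimed equality.

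The main obstacle will be this last Tauberian-type step, since weighted and unweighted Ces\`aro averages of an arbitrary bounded sequence can generically differ. I anticipate needing either a slowly-varying condition on $\mathbb{E}\{F[\tau]\}$ (which the DMRA recursion should furnish because $K[\tau]$ changes by only a constant per slot while $Q[\tau]$ evolves boundedly) or, alternatively, the auxiliary bound $\mathbb{E}\{L[t]\}=o(t^{2})$ extracted from the same drift inequality under Assumption~\ref{assump_slater}, which simultaneously certifies mean-state stability of the blockchain queue. The preceding steps are essentially a re-run of Theorem~\ref{theorem_upperBound_cost}'s accounting with $K$ promoted to $K[\tau]$, so the bulk of the effort will go into this convergence-equivalence argument.
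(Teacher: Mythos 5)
Your proposal diverges from the paper's proof at the crucial step and, as you yourself flag, leaves a genuine gap. The paper avoids the weighted-vs-unweighted Ces\`aro issue entirely by dividing the per-slot inequality
$\mathbb{E}\{L[\tau+1]\}-\mathbb{E}\{L[\tau]\}+K[\tau]F[\tau]\le B+K[\tau]p^{*}$
by $K[\tau]$ \emph{before} summing. Then $F[\tau]$ enters with unit weight, so the ordinary time average $\frac{1}{t}\sum_{\tau=0}^{t-1}F[\tau]$ appears directly, and the Lyapunov terms regroup (Abel summation) into
$\frac{\mathbb{E}\{L[t]\}}{K[t-1]}+\sum_{\tau=1}^{t-1}\mathbb{E}\{L[\tau]\}\bigl(\frac{1}{K[\tau-1]}-\frac{1}{K[\tau]}\bigr)$,
which is non-negative because $K[\cdot]$ is non-decreasing and $L\ge 0$, hence can be dropped. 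What remains is $\frac{1}{t}\sum_{\tau=0}^{t-1}F[\tau]\le p^{*}+\frac{1}{t}\sum_{\tau=0}^{t-1}\frac{B}{K[\tau]}$, and the residual is $\frac{B}{K_{0}}\cdot\frac{1}{t}\sum_{\tau=0}^{t-1}\frac{1}{\tau+1}=O(\log t/t)\to 0$. By summing first, you instead obtain only a bound on the $K[\tau]$-weighted average, and the passage from that to the uniform average is exactly the Tauberian step you identify as the ``main obstacle'': for a general bounded sequence it is false that control of $\sum_{\tau}(\tau+1)a_{\tau}$ yields control of $\frac{1}{t}\sum_{\tau}a_{\tau}$, and neither the ``slowly-varying'' property of $\mathbb{E}\{F[\tau]\}$ nor the $\mathbb{E}\{L[t]\}=o(t^{2})$ estimate you invoke is established anywhere. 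So the proposal as written does not close the proof, whereas the paper's reordering of the divide-and-sum steps makes the whole difficulty disappear.

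Two smaller remarks. First, your auxiliary lower bound $\liminf_{t\to\infty}\frac{1}{t}\sum_{\tau=0}^{t-1}\mathbb{E}\{F[\tau]\}\ge p^{*}$ is not automatic: $p^{*}$ is optimal only among policies satisfying the stability constraint (\ref{queue_stable}), and stability of DMRA under $K[t]=K_{0}(t+1)$ is not proved (the unconstrained minimum $p^{\min}$ can be strictly below $p^{*}$). To be fair, the paper's own proof also establishes only the $\limsup\le p^{*}$ direction and asserts equality, so this particular omission is shared. Second, your accounting up to the weighted bound (Lemma~\ref{lem_drift_bnd} with $K[\tau]$, Lemma~\ref{lem_solnspace} with $\delta\downarrow 0$, telescoping, $\sum_{\tau=0}^{t-1}K[\tau]=K_{0}t(t+1)/2$) is all correct; the fix is simply to normalize by $K[\tau]$ slot by slot rather than at the end.
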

\begin{proof}
	Since $\Delta L[t] = \mathbb{E}\left\{ {L[t + 1] - L[t]|Q[t]} \right\}$, $\mathbb{E}\left\{ {A[t]} \right\} \le \mathbb{E}\left\{ {S[t]V} \right\}$, according to (\ref{eqn_thm_step1}), replacing $K$ with $K[t]$ and taking expectation with respect to $Q[t]$, then we have:
	\begin{align} 
	\mathbb{E}\{{L[t + 1]}\} - \mathbb{E}\{{L[t]}\} + 
	K[t]F[t]  \leq B + K[t]p^*.
	\end{align} 
	Dividing both sides by $K[t]$, we have:
	\begin{align} 
	\frac{{\mathbb{E}\left\{ {L[t + 1]} \right\}}}{{K[t]}} - \frac{{\mathbb{E}\left\{ {L[t]} \right\}}}{{K[t]}}  + 
	F[t] \le \frac{B}{{K[t]}} + p^* .
	\end{align} 
	Summing over $\tau  \in \{ 0,1,...,t-1\}$, we have:
	\begin{align}
	\label{theorem3_summing}
	&\frac{{\mathbb{E}\left\{ {L[t]} \right\}}}{{K[t - 1]}} + \sum\limits_{\tau  = 1}^{t - 1} {\mathbb{E}\left\{ {L[\tau]} \right\}} \left[ {\frac{1}{{K[\tau - 1]}} - \frac{1}{{K[\tau]}}} \right] + 
	\sum\limits_{\tau  = 0}^{t - 1}F[\tau] \notag \\
	&\le \sum\limits_{\tau  = 0}^{t - 1} {\frac{B}{{K[\tau]}}} + p^*t. 
	\end{align}
	Since $K[\tau]$ is non-decreasing and $K[\tau] > 0$, $\forall \tau$, then we have ${\frac{1}{{K[\tau - 1]}} - \frac{1}{{K[\tau]}}} \ge 0 $.
	Because $L[t] \ge 0$, dividing $t$ on both sides of (\ref{theorem3_summing}), we have 
	$
	\frac{1}{t} \sum\nolimits_{\tau  = 0}^{t - 1} F[\tau] \!\le\! p^* \!+\! \frac{1}{t}\sum\nolimits_{\tau  = 0}^{t - 1} {\frac{B}{{K[\tau]}}}. $
	Taking limsup as $t \to \infty$, we get 
	$
	\frac{1}{t}\sum\nolimits_{\tau  = 0}^{t - 1} {\frac{B}{{K[\tau]}}}  = \frac{B}{{{K_0}}}\frac{1}{t}\sum\nolimits_{\tau  = 0}^{t - 1} {\frac{1}{{\tau {\rm{ + }}1}}}  \le \frac{B}{{{K_0}}}\frac{{\log t}}{t} \to 0
	\label{theom3_p_seriers}
	$, which concludes the proof.
\end{proof}

Theorems~\ref{theorem_upperBound_cost} and \ref{theorem_queue_length} show that our algorithm admits a $\left[ {O({1 \mathord{\left/ {\vphantom {1 K}} \right. \kern-\nulldelimiterspace} K}),O(K)} \right]$ cost-delay tradeoff. 
It achieves within an ${O({1 \mathord{\left/
			{\vphantom {1 K}} \right.
			\kern-\nulldelimiterspace} K})}$ neighborhood of the optimal time-average mining cost with a delay on the order of ${O(K)}$. Theorem~\ref{thm_zero_gap} shows that we can narrow the gap by choosing the sequence	$\{K[t]\}$.

\section{Numerical Results} \label{sec:numerical}

In this section, we conduct numerical simulations to evaluate the performance of our proposed dynamic mining resource allocation algorithm for PoW-based blockchain networks. 

{\em 1) Simulation Settings}:
In our experiment, we consider a blockchain network with $N$ miners. 
Each of these miners generates transactions and mines new blocks with one CPU core. 
Transactions arrivals $A[t]$ are i.i.d. across time and uniformly distributed over the interval $[50, 200]$. 
In this paper, we consider two types of mining resources.
The first one is CPU utilization, e.g., the CPU utilization ranges from 40\% to 60\%, and the CPU utilization is managed by the cpulimit tool. 
Higher CPU utilization means higher hashing rate and thus higher probability that the miner successfully mines a block.
However, higher CPU utilization also results in a higher mining energy cost. 
The second type of resource is the electricity.
In the simulations, we adopt the following linear cost model:
$
c(\pmb{\theta}_{i}[t]) = m(\pmb{w}^\top {\pmb{\theta}_{i}[t]}) + n,
\label{cost_fun_def}
$
where $m,n \ge 0$ are constants.
When computing the reward of each block, we use the fixed value reward ($R$) and ignore the random transaction fee ($M$) for simplicity. 
We compare our proposed DMRA algorithm with two baselines. 
The first baseline is denoted as MaxMining, where the system tries to make use of all available resources to mine a new block at all times. 
The second baseline is denoted as RandMining, where the system randomly allocates mining resources to mine a block.
Unless otherwise mentioned, we use the following default parameter settings: 
The length of each time-slot is 1 minute, the total time duration of all simulations is 200 minutes.
We set the tradeoff parameter $K = 20$.
The weight for CPU utilization and electricity is set to 3 and 1, respectively.
We let $N = 4$, $R = 3$, $V = 3$, $m = 0.45$, $n = 0$, and $\epsilon = 0.5$.

{\em 2) Performance Analysis}:
Figs.~\ref{fig_mine_cost_base} and~\ref{fig_mine_queue_base} show the mining cost and delay performance achieved by our proposed DMRA method and baselines. 
We observe that our dynamic mining resources allocation algorithm DMRA achieves similar delay performance as MaxMining.
However, DMRA significantly reduces the mining cost compared to MaxMining. 
Even though RandMining could reduce the mining cost to some extent, it could not keep the blockchain queue stable since it allocates the mining resources randomly. 
In general, our proposed DMRA method significantly reduces the mining cost while keeping the blockchain queue stable at the same time.
This is because, compared to the baselines, our proposed algorithm can dynamically allocate mining resources according to the number of transactions in the blockchain queue.

\begin{figure}[t!]
	\centering
	\begin{minipage}[t]{0.47\linewidth}
		{\includegraphics[height=3.6cm]{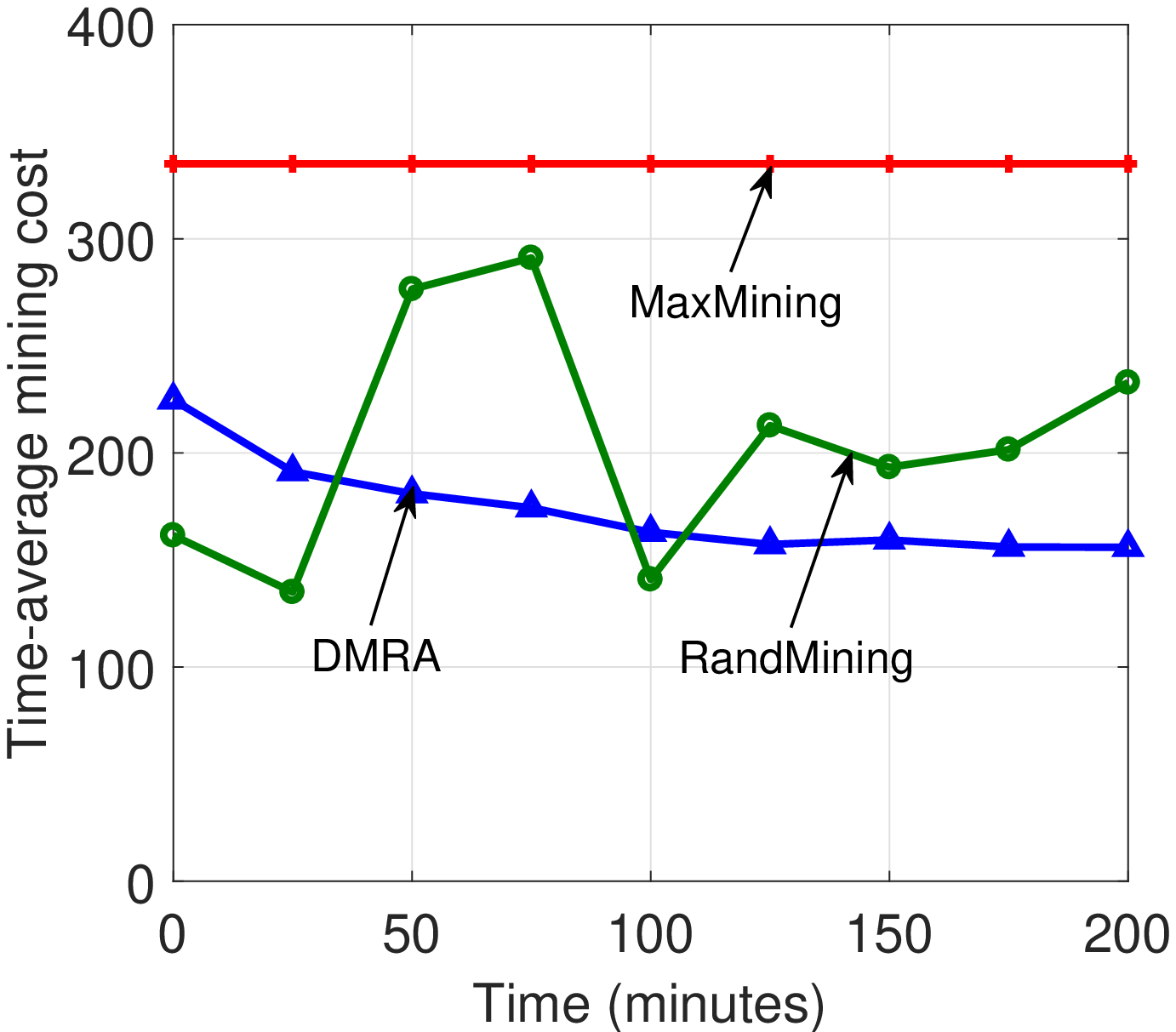}}
		\caption{Time-average mining cost at time slots.} 
		\label{fig_mine_cost_base}
	\end{minipage}%
	\hspace{0.009\textwidth}
	\begin{minipage}[t]{0.47\linewidth}
		{\includegraphics[height=3.6cm]{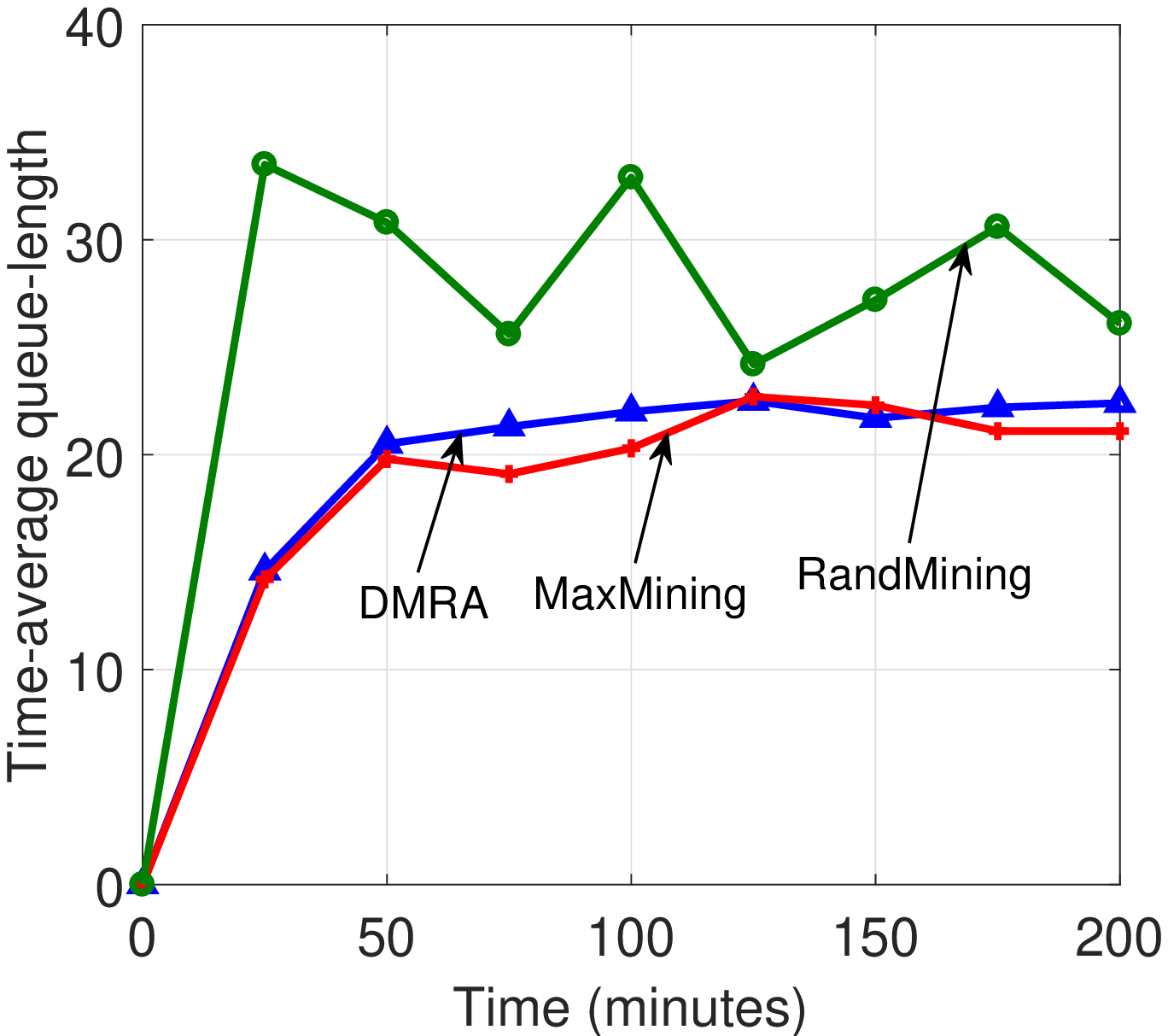}}
		\caption{Time-average queue-length at time slots.} 
		\label{fig_mine_queue_base}
	\end{minipage}
	\vspace{-.15in}
\end{figure}

\begin{figure}[t!]
	\centering
	\begin{minipage}[t]{0.47\linewidth}
		{\includegraphics[height=3.6cm]{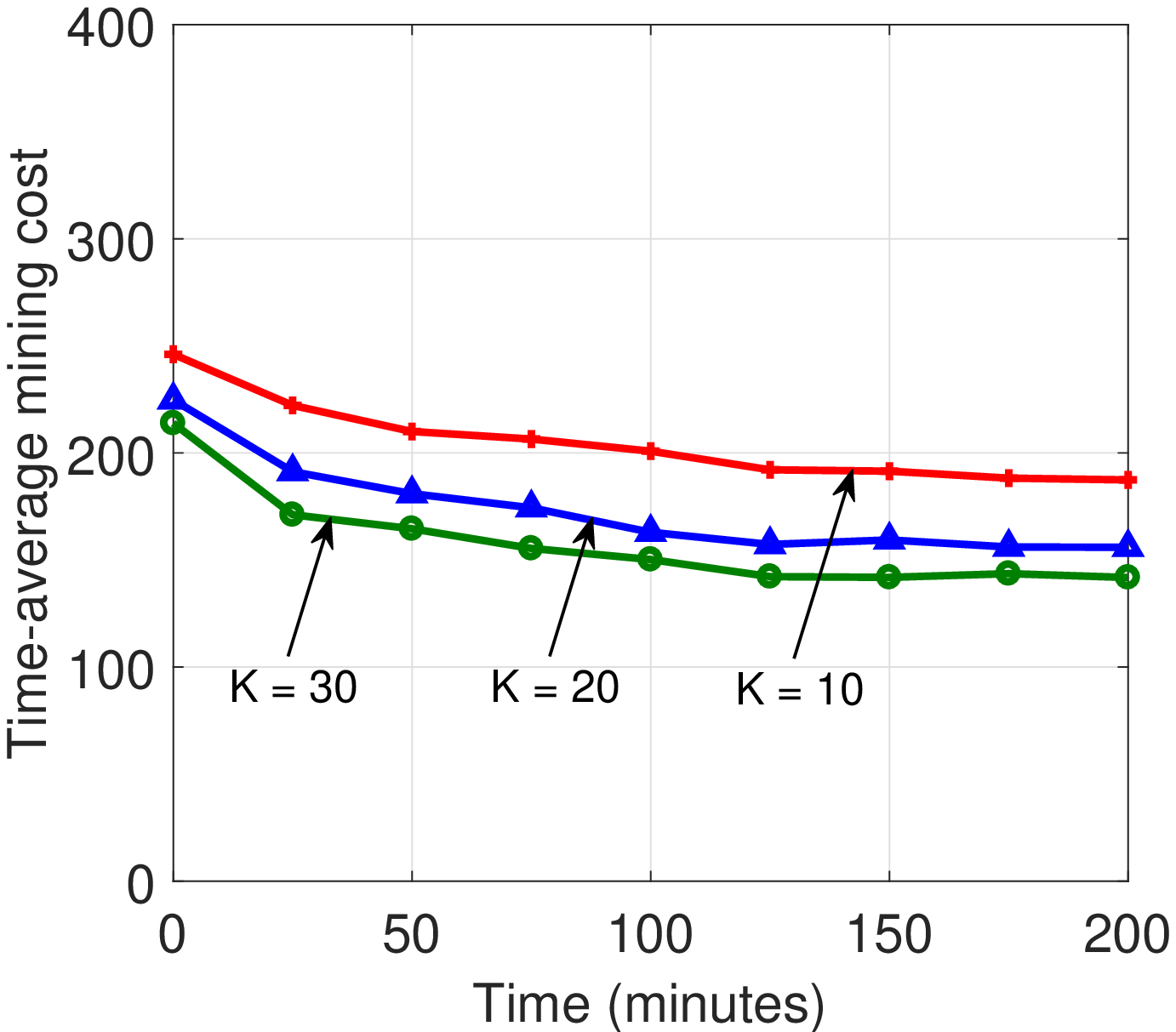}}
		\caption{Time-average mining cost for different $K$.} 
		\label{fig_mine_cost_K}
	\end{minipage}%
	\hspace{0.009\textwidth}
	\begin{minipage}[t]{0.47\linewidth}
		{\includegraphics[height=3.6cm]{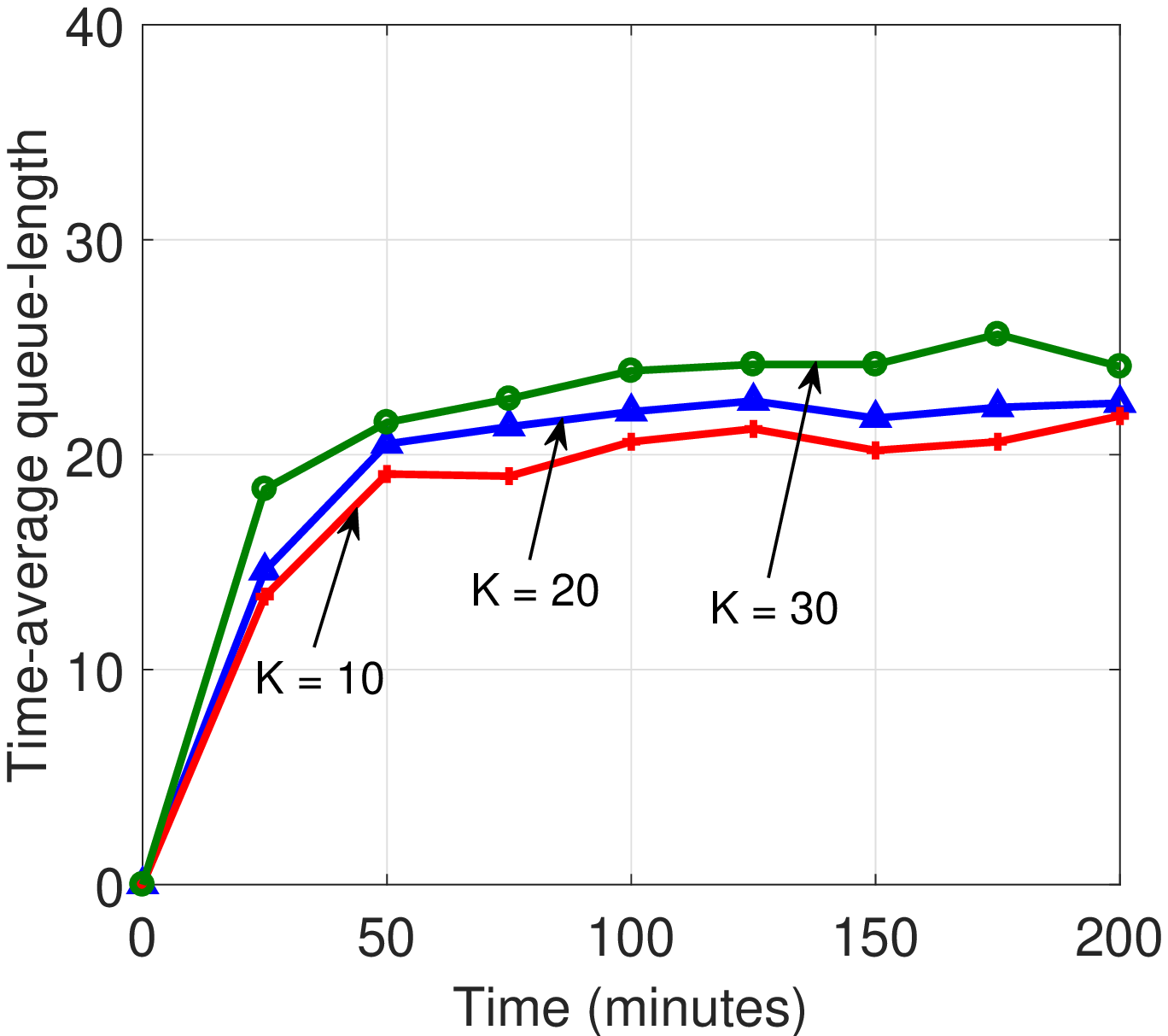}}
		\caption{Time-average queue-length for different $K$.} 
		\label{fig_mine_queue_K}
	\end{minipage}
	\vspace{-.15in}
\end{figure}

{\em 3) Impact of the Tradeoff Parameter $K$}:
Figs.~\ref{fig_mine_cost_K} and~\ref{fig_mine_queue_K} show the tradeoff between time-average mining cost and time-average queue-length under different values of tradeoff parameter $K$. 
We can observe that a larger $K$ leads to a better mining cost performance, but at the cost of incurring a larger queueing delay and vice versa.
Thus, there is a tradeoff between the mining cost and the queueing delay.
By tuning the value of $K$, we can control the tradeoff between mining cost and delay in blockchain networks.

\section{Conclusion} \label{sec:conclusion}
In this paper, we proposed a queueing-based analytical model to study the mining resources allocation problem in PoW-based blockchain networks. 
In our queueing model, we did not assume the knowledge of arrivals probability distribution and are agnostic to any priority mechanism. 
We leveraged Lyapunov optimization techniques and proposed a dynamic mining resources allocation algorithm with a parameter $K$, which can be used to tune the trade-off between mining energy cost and queueing delay. 
We showed that the proposed algorithm is within an $O({1 \mathord{\left/{\vphantom {1 K }} \right.\kern-\nulldelimiterspace} K })$ distance to the optimum, with a worst case delay scaling as $O(K)$. 
Our simulation results also demonstrated the effectiveness of our proposed algorithm in reducing the mining cost for blockchain networks.


\end{document}